\let\csname equation*\endcsname\relax
\let\csname endequation*\endcsname\relax
\newtheorem{theorem}{Theorem}
\theoremstyle{plain}
\newtheorem{proposition}[theorem]{Proposition}
\theoremstyle{plain}
\DeclarePairedDelimiter{\floor}{\lfloor}{\rfloor}
\newcommand{\pdv}[3][]{\frac{\partial^{#1} #2}{\partial #3^{#1}}}
\begin{document}
\title{Effects of detuning on $\mathcal{PT}$-symmetric,  tridiagonal,  tight-binding models}

\author{Jacob L. Barnett$^{1,2}$ and Yogesh N. Joglekar$^3$}
\address{$^1$Perimeter Institute for Theoretical Physics, 31 Caroline Street North, Waterloo, Ontario N2J 2Y5, Canada}
\address{$^2$Department of Physics \& Astronomy, University of Waterloo, Waterloo, Ontario N2L 3G1, Canada.}
\address{$^3$ Department of Physics, Indiana University-Purdue University Indianapolis (IUPUI),  Indianapolis,  Indiana 46202 USA}
\ead{jbarnett@perimeterinstitute.ca}
\ead{yojoglek@iupui.edu}

\begin{abstract}
Non-Hermitian,  tight-binding $\mathcal{PT}$-symmetric models are extensively studied in the literature.  Here, we investigate two forms of non-Hermitian Hamiltonians to study the $\mathcal{PT}$-symmetry breaking thresholds and features of corresponding surfaces of exceptional points (EPs).  They include one-dimensional chains with uniform or 2-periodic tunnelling amplitudes,  one pair of balanced gain and loss potentials $\Delta\pm\i\gamma$ at parity-symmetric sites,  and periodic or open boundary conditions.  By introducing a Hermitian detuning potential,  we obtain the dependence of the $\mathcal{PT}$-threshold,  and therefore the exceptional-point curves,  in the parameter space of detuning and gain-loss strength.  By considering several such examples,  we show that EP curves of a given order generically have cusp-points  where the order of the EP increases by one. In several cases,  we obtain explicit analytical  expressions for positive-definite intertwining operators that can be used to construct a complex extension of quantum theory by re-defining the inner product. Taken together, our results provide a detailed understanding of detuned tight-binding models with a pair of gain-loss  potentials.  
\end{abstract}

\section{Introduction}
\label{sec:intro}

Central to the axioms of quantum theory is the Hermiticity of the Hamiltonian, as it guarantees a unitary description of time evolution. Unitary time evolution only applies to \textit{isolated} quantum systems.  When a small quantum system interacts with the environment,  the resulting dynamics for the reduced density matrix of the system is,  typically,  decoherence inducing.  Under mild conditions such as a Markovian bath,  this dynamics is described by a completely positive trace preserving (CPTP) map that is generated by the Lindblad equation~\cite{Gorini1976,Lindblad1976}.  In recent years,  non-Hermitian Hamiltonians have been extensively studied due to their emergence as effective descriptions of classical systems with gain and loss~\cite{Joglekar2013}.  In the truly quantum domain,  it has been shown that they emerge from Lindblad equation through post-selection where trajectories with quantum jumps are eliminated~\cite{Naghiloo2019}.   Examples of phenomena modelled by non-Hermitian Hamiltonians vary from gain and loss in photonics \cite{ElGanainy2007,Guo2009,Rter2010,ElGanainy2018}, radioactive decay in nuclear systems \cite{Siegert1939,Feshbach1958,Feshbach1962}, and renormalization in quantum field theories \cite{quantumRG,LeeModel,kallen1955mathematical,LeeModelPT}.

Of particular interest in the study of non-Hermitian Hamiltonians are those with an antilinear symmetry.  A system whose time evolution is governed by Hamiltonian with an antilinear symmetry  exhibits time reversal symmetry \cite{Wigner1932}.  A Hamiltonian with an antilinear symmetry has eigenvalues which are purely real or occur in complex conjugate pairs,  because if $\lambda$ is an eigenvalue of that operator,  then $\lambda^*$ also satisfies the characteristic equation.  This feature - pairing of complex-conjugate eigenvalues - is often used to reflect systems with balanced loss and gain.
Additionally,  if a Hamiltonian exhibits $\textit{unbroken}$ antilinear symmetry,  so that all of its eigenspaces are invariant under the same symmetry,  the Hamiltonian's spectrum is real \cite{bender1999pt}.  For historical reasons,  the linear and complex-conjugation parts of the antilinear symmetry are called \textit{parity} and \textit{time-reversal} symmetries respectively.  In our models of $n$-site graphs,  the state space is the Hilbert space $\mathbb{C}^n$,  and the actions of parity and time-reversal operators are given by 
\begin{align}
\mathcal{P}_n e_k &= e_{\overline{k}} \label{P} \\
\mathcal{T} e_{k} &= e_{k}, \label{T}
\end{align}
where $(e_k)_j = \delta^k_{j}$ is the canonical basis for $\mathbb{C}^n$, $\delta$ is the Kronecker delta, and $\overline{k} = n+1-k$.  Hamiltonians $H$ which are $\mathcal{PT}$-symmetric in this sense satisfy the constraint $H_{pq}=H^{*}_{\overline{p}\overline{q}}$ and are referred to as \textit{centrohermitian} \cite{Lee1980}.

Given a $\mathcal{PT}$-symmetric Hamiltonian which depends on a set of parameters, the $\mathcal{PT}$-symmetry is unbroken for a subset of parameter space,  the boundary of which consists exceptional points. \textit{Exceptional points} (EPs) are spots in parameter space where the number of distinct eigenvalues  (and corresponding eigenvectors)  
decreases \cite{Kato1995}. We define the \textit{order} of an EP to be the number of coalescing eigenvectors (irrespective of the  algebraic multiplicity of corresponding  eigenvalue) at the EP,  and refer to an $k$-th order EP as an EP$k$.  

An equivalent condition for the existence of an antilinear symmetry for $H$ is \textit{pseudo-Hermiticity} \cite{mostafazadeh2002pseudo3,Siegl2009,siegl2008quasi}. Pseudo-Hermitian operators are those such that there is an Hermitian \textit{intertwining operator}, $M=M^\dagger$, satisfying
\begin{equation}
H = M^{-1} H^\dag M. \label{Dieudonne}
\end{equation}
In the case where $M$ is positive definite, we refer to it as a \textit{metric operator}, and $H$ is called \textit{quasi-Hermitian} \cite{dieudonne,QuasiHerm92}. A finite dimensional matrix has real eigenvalues if and only if it's quasi-Hermitian \cite{Drazin1962,BiOrthogonal,mostafazadeh2002pseudo2,mostafazadeh2008metric}. Furthermore, 
the metric operator defines an inner product for which a quasi-Hermitian operator is self-adjoint. Thus, quasi-Hermitian operators can be realized as observables in a fundamental extension of quantum theory to self-adjoint but non-Hermitian Hamiltonians~\cite{QuasiHerm92}.  On the other hand,  if non-Hermitian Hamiltonians are considered an effective description, where loss of unitarity is not prohibited,  one uses the Dirac-inner product to obtain observables and predictions, and the intertwining operators take the role of time invariants \cite{bian2019time}.  

Given a pair of Hamiltonian $H$ and metric $M$,  the metric for all similar Hamiltonians $S^{-1} H S$ can be constructed as~\cite{kretschmer2001interpretation}, 
\begin{equation}
(H,M)\leftrightarrow (H',M')=(S^{-1}HS,M'=S^\dag M S). \label{MetricMapper}
\end{equation}
Notably,  choosing $S = M^{-1/2}$ implies $M'=\mathbb{1}$ and therefore $H'^\dagger=H'$, i.e.  an equivalent Hermitian Hamiltonian exists for all quasi-Hermitian Hamiltonians with bijective metric operators \cite{williams1969operators,mosta2003equivalence}.

The models in this paper are special cases of transpose-symmetric tridiagonal matrices over $\mathbb{C}^n$ with perturbed corners,
\begin{align}
H_{ij} &= z_i \delta^i_{j} + t_j \delta^i_{j+1} + t_i \delta^j_{i+1} + \delta^i_{\bar{j}}(t_L \delta^i_{1} + t_R \delta^i_{n}) 
\nonumber\\
&= \begin{pmatrix} 
z_1 & t_1 & 0 & \dots &0 &  t_L \\
t_1 & z_2 & t_2 & \ddots & \ddots & 0 \\
0 & t_2 & z_3 & \ddots & \ddots & \vdots \\
\vdots & \ddots & \ddots & \ddots & t_{n-2} & 0 \\
0 & \ddots & \ddots & t_{n-2} & z_{n-1} &t_{n-1} \\
t_R & 0 & \dots & 0 & t_{n-1}  & z_n
\end{pmatrix},
\label{TriDiag}
\end{align}
where $z_i, t_L, t_R \in \mathbb{C}$ and $t_i\in\mathbb{R}$.

$\mathcal{PT}-$symmetric variants of \cref{TriDiag} have been well explored  \cite{Korff2008,jin2009solutions,
InfiniteLattice,Babbey,JoglekarSaxena,PTRing,MyFirstPaper,
ortega2019mathcal,farImpurityMetric,guo2016solutions,
Znojil2009,ZnojilGeneralized,Ruzicka2015,Zhu2014,Klett2017,Jin2017,Lieu2018,Yao2018,Turker2019,Mochizuki2020}. Numerous examples of \cref{TriDiag} have closed form solutions for the spectrum, an incomplete list includes \cite{rutherford1948xxv,elliott1953characteristic,losonczi1992eigenvalues,
yueh2005eigenvalues,da2007characteristic,YUEH2008,Willms2008,Chang2009,Joglekar2010,
daFonseca2015,KILI2016,Chu2019,Alazemi2021}. Due to the well-known similarity transformation between generic tridiagonal matrices and their transpose-symmetric counterparts, displayed in \cite{santra2002non,JoglekarSaxena}, the results of this report readily generalize to tridiagonal matrices which are not transpose symmetric, such as the Hatano-Nelson model \cite{Hatano1996}. 

\section{Tight-binding models}

The results of this paper can be categorized into three groups. The first two groups pertain to two special cases of matrices of the form \cref{TriDiag}, and the last group pertains to generic features of exceptional points of bivariate matrix polynomials. These results are now described in order.

\begin{figure}[htp!] 
\centering
\includegraphics[width = 80mm]{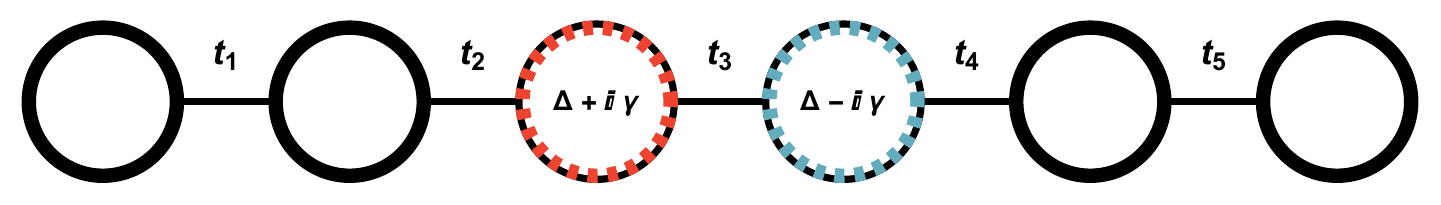}
\includegraphics[width = 80mm]{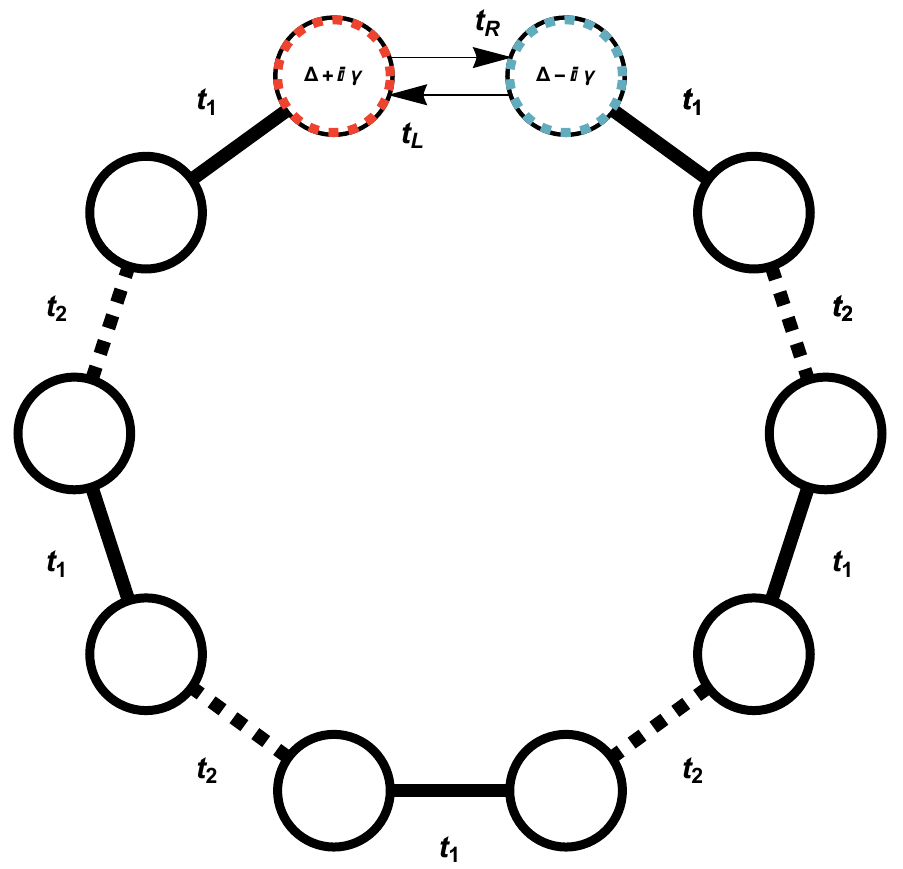}
\caption{Graphical representation of the two cases of \cref{TriDiag} studied in this report, for $n = 6$ (open chain, top) and $N=10$ (closed, SSH chain, bottom) respectively. The first case of an open chain with nearest neighbour defects is studied in \cite{Babbey}, and the second case of an SSH chain with non-Hermitian defects on the boundary is studied in \cref{SSH}.} \label{chains}
\end{figure}

The types of matrices studied in the first two groups of results are graphically depicted in \cref{chains}. In the first case, we consider a general Hermitian chain on an even lattice with open boundary conditions and non-Hermitian perturbations on the central two sites. In the second case, we consider a Su-Schrieffer-Heeger (SSH) chain with a pair non-Hermitian perturbations at the edges of the chain.
The diagonal elements of $H$ are assumed to be real-valued everywhere except at a pair of mirror-symmetric sites, $(m, \overline{m})$ with $m \in \{1, \dots, n \}$. The sites $(m, \overline{m})$ will be referred to as \textit{defects}. More explicitly,
\begin{align}
z_j \in \mathbb{R} \, \, \forall j \notin \{m ,\bar{m} \}.
\end{align} To simplify select equations, we will denote the defect potentials as $z_m=\Delta+i\gamma$ and $z_{\overline{m}}=z_m^*=\Delta-i\gamma$.  Here,  without loss of generality,  we take $\gamma\geq 0$; therefore,  the site $m$ is the gain site and its mirror-symmetric site $\overline{m}$ is the loss site.  

We will refer to the parameter $\Delta$ as \textit{detuning}.  To enforce $\mathcal{PT}$-symmetry, in most of the paper,  we make the following assumptions on the model parameters:
\begin{align}
t_k &= t_{n-k} \in \mathbb{R} \setminus \{0\} \nonumber \\
z_k &= z_{\bar{k}}^* \nonumber \\
t_L &= t_R^* \in \mathbb{C}. \label{mostAssumptions}
\end{align} 

The spectrum of $H$,  $\sigma(H)$, describing an open chain  obeys the following symmetries:
\begin{equation}
\sigma(H(z_i)) = -\sigma(H(-z_i)) = \sigma(H(z_i^*))^* ,
\end{equation}
where the first equality arises from the similarity transform $(-1)^{i+j} H_{ij}(z_i) = - H_{ij}(-z_i)$ \cite{kahan1966accurate,Valiente2010,Joglekar2010} and the second equality arises from the $\mathcal{PT}$-symmetry of the Hamiltonian.  When $z_i=0$, this symmetry is called chiral symmetry. Physically, it states that eigenvalues of $H$ arise in particle-hole symmetric pairs, and signals the existence of an operator $\Pi_{ij}=(-1)^i\delta_{ij}$ that anticommutes with the Hamiltonian. 

\subsection{Nearest Neighbour Defects}

In \cref{openChain}, we consider the case with nearest neighbour defects and open boundary conditions, i.e. $n = 2m$, and $t_L =0= t_R$.  In this case,  the $\mathcal{PT}$-threshold is equal to the magnitude of the tunnelling amplitude $t_m>0$ between the nearest-neighbour defect sites.  Note that $t_m>0$ can be chosen without loss of generality.  For $\gamma \leq t_m$,  the spectrum is purely real,  and the EP occurs when $\gamma=\gamma_\textrm{EP}=t_m$ where the $2m$ dimensional system has exactly $m$ linearly independent eigenvectors.   For $\gamma>t_m$,  there are no real eigenvalues,  i.e. the system has maximally broken $\mathcal{PT}$-symmetry~\cite{MyFirstPaper}. 

In \cref{homomorphismMetric}, we obtain a one-parameter family of intertwining operators, $M$. A subset of positive-definite metric operators exists when the gain-loss strength satisfies $\gamma < t_m$.  The intertwining operator is used to construct a so-called $\mathcal{C}$-symmetry of our transpose-symmetric Hamiltonian \cite{bender2002complex,CorrectCPT}. Using the metric, we compute a similar Hermitian Hamiltonian, $H'$, in \cref{equivHermHam}. 
Notably, the similarity-transformed Hermitian Hamiltonian is local. This contrasts with the generic cases of local $\mathcal{PT}-$ unbroken Hamiltonians,  whose similar Hermitian operators are nonlocal \cite{Korff2008}. 

Where the tunnelling is uniform, $t_i = t>0$, the spectrum of $H$ can be computed exactly for some special cases of defect potentials,  summarized in \cref{table}. Note that for a uniform chain, the choice of positive $t$ is always possible by a unitary transformation of the Hamiltonian. Since the eigenvalues of tridiagonal matrices always have geometric multiplicity equal to one \cite{elliott1953characteristic}, the cases in table~\eqref{table} where $H$ has less than $n$ distinct eigenvalues are exceptional points. 

\begin{table*}[htp!]
\centering
\begingroup
\setlength{\tabcolsep}{8pt} 
\renewcommand{\arraystretch}{1.25} 
\begin{tabular}{|l|l|}
\hline
Constraints & Eigenvalues of $H$ \\
\hhline{|=|=|}
$z_m  = \pm i t$
& $\left\{
2t \cos \left(\dfrac{j \pi}{m+1}\right)\,\mid \, j \in \{1, \dots, m\}
\right\}$\\
\hline
$z_m = (-1\pm i) t$  
& $\left\{
2 t\cos\left( \dfrac{2j \pi }{2m + 1} \right)\,|\, j \in \{1, \dots, m\} \right\} $\\
\hline
$
z_m = (1\pm i) t$  
& $\left\{
2t \cos \left(\dfrac{(2j-1)\pi}{2m+1} \right)\,|\, j \in \{1, \dots, m\} \right\}$ \\
\hline
$z_m = e^{\pm i \pi/3} t$  
& $\begin{array}{l} \left\{ 2t \cos \left(\dfrac{j \pi}{m+1} \right)\,|\, j \in \{1, \dots, m\} \right\} \cup\\ \left\{2t \cos \left(\dfrac{(2j-1) \pi }{2m + 1}\right)\,|\, j \in \{1, \dots, m\}  \right\} \end{array}$\\
\hline
$
z_m = e^{\pm 2 \pi i/3} t $ 
& $\begin{array}{l} \left\{2t \cos \left(\dfrac{j \pi}{m+1}\right)\,|\, j \in \{1, \dots, m\} \right\} \cup\\ \left\{2t \cos \left(\dfrac{2j \pi}{2m+1}\right)\,|\, j \in \{1, \dots, m\} \right\} \end{array}$ \\
\hline
\end{tabular}
\centering
\caption{Non-Hermitian cases where the spectrum of $H$ has a closed form solution,  for an even, uniform lattice with open boundary conditions and nearest-neighbour defects.  The entry in the first row was known to \cite{Joglekar2010}. The eigenvectors can be constructed using the ansatz of \cite{Joglekar2010} or by computing characteristic polynomials \cite{Gantmacher2002}.}
\label{table}
\endgroup
\end{table*}

\subsection{SSH Chain}
\label{subSSH}

Our second set of results pertain to a non-Hermitian perturbation of the SSH model with open boundary conditions and non-Hermitian defects at the edges of the chain ($m = 1$).  Mathematically, we assume the tunnelling amplitudes are 2-periodic,  given by $t_1,t_2>0$ respectively, and we set
\begin{equation}
z_i = 0 \, \, \forall i \notin \{m ,\bar{m} \} \label{ParamAssumptions}
\end{equation}

Note that the choice of positive $t_1,t_2$ for an open chain is always possible by using a diagonal, unitary transform. The case with zero detuning was studied in \cite{Zhu2014,Klett2017}, additional non-Hermitian perturbations of the SSH chain can be found in \cite{Ruzicka2015,Jin2017,Lieu2018,Yao2018,Turker2019,Mochizuki2020}, and several special cases of the eigenvalue equation are exactly solvable~\cite{rutherford1948xxv,elliott1953characteristic,losonczi1992eigenvalues,
yueh2005eigenvalues,da2007characteristic,Willms2008,Chu2010,Joglekar2010,daFonseca2015}. The characteristic polynomial for even and odd SSH chains are presented in \cref{charPolyTable}, generalizing the results in \cite{da2007characteristic,ortega2019mathcal}. 
Several special cases of the eigenvalue equation are exactly solvable~\cite{rutherford1948xxv,elliott1953characteristic,losonczi1992eigenvalues,
yueh2005eigenvalues,da2007characteristic,Willms2008,YUEH2008,Chu2010,Joglekar2010,daFonseca2015,modak2021eigenstate}. 

When  $t_2<t_1$,  i.e.  the weak links are in the interior of the chain,  the system is in the topologically trivial phase.  When $t_1<t_2$,  the weak-links are at the edges of the chain,  thereby rendering the system in the topologically nontrivial phase.  In the thermodynamic limit ($n \to \infty$), in topologically nontrivial phase with zero detuning,   \cite{Klett2017} demonstrated that the $\mathcal{PT}$-symmetry breaks at $\gamma = 0$ due to the presence of \textit{edge states}, eigenstates which are peaked at the edges of the chain and decay exponentially as one moves inwards.  Thus, the uniform chain with $t_1=t_2$ marks the transition between topologically trivial and non-trivial phases. We will therefore refer to it as a critical SSH chain as well. When we place two defects with detuning in an SSH  system, the constraints of proposition~\eqref{inclusionTheorem} yield $\sigma(H)\subset\mathbb{R}$,  i.e.  a $\mathcal{PT}$-unbroken phase. A subset of the $\mathcal{PT}$-unbroken domain includes
\begin{align}
&(\Delta^2 + \gamma^2 \leq t_2^2 \leq t_1^2 ) \vee (\Delta^2 + \gamma^2 =t_2^2 \wedge \gamma^2 < t_1^2 ).
\end{align}
A subset of the $\mathcal{PT}$-broken phase is given by \cref{unbrokenIneq1,unbrokenIneq2}.

Continuing with the case of the critical SSH chain, $t_2=t_1=t$, we expand upon the works of \cite{Korff2008,jin2009solutions,farImpurityMetric}. The set of exceptional points is determined analytically in \cref{EPSurface}. Asymptotic expressions for this set are studied in the large detuning case, $\Delta/t\rightarrow \infty$, and we find the critical defect strength  scales as $\gamma_\textrm{EP}/t\propto (\Delta/t)^{(n-2)}$. In the thermodynamic limit of $n \rightarrow \infty$, the $\mathcal{PT}-$unbroken region is numerically demonstrated to approach the union of a unit disk $|z_1|/t =1$ and the real axis $\gamma = 0$.

For defects inside a uniform  chain, instead of at its end-points, we find that a subset of the spectrum is independent of the defect strength $z_m$ whenever $m$ shares a nontrivial factor with $n+1$; this occurs because precisely the open-uniform-chain eigenfunctions have a node at defect location, thereby rendering the defect invisible to their energies. 

 Exceptional points occur when these eigenvalues are multiple roots of the characteristic polynomial. In general, these exceptional points do not coincide with the $\mathcal{PT}$-symmetry breaking threshold, and the spectrum is generically complex in the vicinity of these exceptional points. Furthermore, as demonstrated in \cite{ortega2019mathcal}, when $\Delta = 0$, there are even more constant eigenvalues.

\section{Open Chain with Nearest Neighbour Impurities} \label{openChain}

In this section, we present analytical results for a non-uniform open chain with $n=2m$ and nearest neighbour defects $z_m=\Delta+i\gamma=z^*_{m+1}$. In particular we show that most of its properties are determined solely by the tunnelling amplitude $t_m>0$ connecting the two defect sites. 

\subsection{Intertwining operators and Inner product}

\begin{proposition}
A Hermitian intertwining operator $M$  for the matrix $H$ of \cref{TriDiag} in the $\mathcal{PT}$-symmetric case $t_k=t_{n-k}$  with nearest neighbour defects  and open boundary conditions is given by the block matrix
\begin{align}
M(Z_m) &= \begin{pmatrix}
\mathbb{1}_m & \frac{Z^*_m}{t_m} \mathcal{P}_m \\
\frac{Z_m}{t_m} \mathcal{P}_m & \mathbb{1}_m
\end{pmatrix}, \label{homomorphismMetric} 
\end{align}
where  $\mathbb{1}_m$ is the $m \times m$ identity matrix and $Z_m$ is a constant with arbitrary real part and $\Im Z_m = \gamma$. $M$ is positive-definite when $|Z_m|< t_m$. $M$ is the only intertwining operator for $H$ which is a sum of the identity matrix and an antidiagonal matrix. 
\end{proposition}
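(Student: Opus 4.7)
The plan is to verify the intertwining relation $MH=H^{\dagger}M$ by a direct block computation, read off positive-definiteness from a Schur complement, and then pin down the one-parameter family by a commutator argument.

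I would first split the $n\times n$ Hamiltonian with $n=2m$ into four $m\times m$ blocks
\[
H=\begin{pmatrix} A & B\\ B^{T} & A'\end{pmatrix},\qquad B=t_m e_m e_1^{T},
\]
where $A$ is the tridiagonal sub-Hamiltonian on the gain half (with last diagonal entry $z_m=\Delta+i\gamma$) and $A'$ is the sub-Hamiltonian on the loss half. The $\mathcal{PT}$-constraints $t_k=t_{n-k}$ and $z_k=z_{\bar{k}}^{*}$ yield the key structural identity $A'=\mathcal{P}_m A^{*}\mathcal{P}_m$, so
\[
H^{\dagger}=\begin{pmatrix} A^{*} & B\\ B^{T} & \mathcal{P}_m A\mathcal{P}_m\end{pmatrix}.
\]
Computing $MH$ and $H^{\dagger}M$ block by block, the two off-diagonal blocks of $MH-H^{\dagger}M$ vanish identically via $\mathcal{P}_m A'=A^{*}\mathcal{P}_m$, and the four corner simplifications $\mathcal{P}_m B^{T}=t_m e_m e_m^{T}$, $B\mathcal{P}_m=t_m e_m e_m^{T}$, $\mathcal{P}_m B=t_m e_1 e_1^{T}$, $B^{T}\mathcal{P}_m=t_m e_1 e_1^{T}$ collapse the $(1,1)$- and $(2,2)$-block equations to the rank-one defect identities $A-A^{*}=(Z_m-Z_m^{*})\,e_m e_m^{T}$ and $A'-\mathcal{P}_m A\mathcal{P}_m=(Z_m^{*}-Z_m)\,e_1 e_1^{T}$, both of which hold iff $\Im Z_m=\gamma$, with $\Re Z_m$ left arbitrary.

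Positive-definiteness then follows from the Schur complement: with both diagonal blocks of $M$ equal to $\mathbb{1}_m$, the complement is
\[
\mathbb{1}_m-\frac{Z_m}{t_m}\mathcal{P}_m\cdot\frac{Z_m^{*}}{t_m}\mathcal{P}_m=\left(1-\frac{|Z_m|^{2}}{t_m^{2}}\right)\mathbb{1}_m,
\]
so $M>0$ iff $|Z_m|<t_m$. For uniqueness I would parametrize any Hermitian candidate of the stated form as
\[
M=\mathbb{1}_{n}+\begin{pmatrix} 0 & D\mathcal{P}_m\\ \mathcal{P}_m D^{*} & 0\end{pmatrix},
\]
with $D$ an arbitrary diagonal matrix, using that every $m\times m$ antidiagonal matrix is uniquely of the form $D\mathcal{P}_m$ and that $M^{\dagger}=M$ forces the two off-diagonal blocks to be Hermitian conjugates. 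Re-running the block product, the $(1,2)$-block now yields the constraint $[D,A^{*}]=0$; since $A^{*}$ is tridiagonal with every coupling $t_1,\ldots,t_{m-1}$ nonzero, this forces $D$ to be a scalar multiple of $\mathbb{1}_m$, after which the $(1,1)$-relation pins the imaginary part of that scalar to $-\gamma/t_m$, recovering the advertised one-parameter family exactly.

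The main obstacle is disciplined block bookkeeping; once the four corner identities above are observed, the verification collapses to comparing two rank-one matrices, and the uniqueness step collapses to solving a single scalar commutator plus a single real equation, both of which are essentially immediate.
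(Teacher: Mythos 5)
Your proof is correct, but it takes a genuinely different route from the paper. The paper proceeds by induction on the chain length: it takes the known classification of $2\times 2$ intertwiners as the base case, observes that the interior rows of \cref{Dieudonne} for an $(n_0+1)$-site chain reproduce the $n_0$-site conditions, and extracts the corner constraints to propagate the form \cref{homomorphismMetric}; positivity is then read off from the decomposition of $M$ into commuting $2\times 2$ blocks on $\mathrm{span}\{e_k,e_{\bar k}\}$. You instead verify $MH=H^\dagger M$ in one shot via the half-chain block decomposition, with the structural identity $A'=\mathcal{P}_m A^*\mathcal{P}_m$ (a direct consequence of $t_k=t_{n-k}$ and $z_k=z_{\bar k}^*$) killing the off-diagonal blocks and the rank-one defect identity $A-A^*=2\i\gamma\, e_m e_m^T$ forcing $\Im Z_m=\gamma$; positivity comes from the Schur complement $(1-|Z_m|^2/t_m^2)\mathbb{1}_m$, and uniqueness from writing the antidiagonal part as $D\mathcal{P}_m$ and noting that $[D,A^*]=0$ with all $t_1,\dots,t_{m-1}$ nonzero (guaranteed by \cref{mostAssumptions}) forces $D$ scalar. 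Your version is more self-contained (no appeal to the $n=2$ classification) and makes the mechanism transparent, and your uniqueness step is arguably tighter than the paper's inductive one; the paper's induction, on the other hand, naturally establishes uniqueness within the slightly larger class of ``diagonal plus antidiagonal'' candidates rather than only ``identity plus antidiagonal.'' One cosmetic remark: since $MH-H^\dagger M$ is anti-Hermitian, the $(2,1)$ block condition is automatic once the $(1,2)$ one holds, but the $(2,2)$ condition is not a priori implied by the $(1,1)$ one, so it is good that you checked it explicitly and found it equivalent.
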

\begin{proof}
The proof is by induction. For $n=2$, the most general intertwining operator (modulo  trivial multiplicative constant) is \cite{wang20132}
\begin{equation}
M = \begin{pmatrix}
1 & Z_m^*/t_m \\
Z_m/t_m & 1
\end{pmatrix}.
\end{equation}
Suppose $M$ has the form \cref{homomorphismMetric}  when $n = n_0$. $M$ is the sum of a diagonal and an antidiagonal matrix, the following identity is a re-expression of \cref{Dieudonne} for $n=n_0+1$, 
\begin{equation}
\sum^{n_0}_{j=2} M_{i\,j} H_{jk} = \sum^{n_0}_{j=2} H^{\dag}_{ij} M_{j\,k}, \,\,\,\,\, \forall \,\, 1 < i,k < n.
\end{equation} 
Thus, for $n = n_0+1$, $M$ is a sum of a diagonal and an antidiagonal matrix and satisfies \cref{Dieudonne} if and only if
\begin{align}
\begin{pmatrix}
M_{1,1} & M_{1,n} \\
M_{n,1} & M_{n,n}
\end{pmatrix} &= 
\begin{pmatrix}
M_{2,2} & M_{2,n-1} \\
M_{n-1,2} & M_{n-1,n-1}
\end{pmatrix},
\end{align}
which implies $M$ must have the form of \cref{homomorphismMetric}. Since $M$ is a direct sum of commuting $2 \times 2$ block matrices, $M$ is positive-definite if and only if $|Z_m|<t_m$. That $M$ is the metric of $H$ was initially stated in \cite{Barnett_2021}. Previous literature found the special case of $M$ for a uniform chain \cite{Znojil2009} and the special case with $n = 2$ \cite{mosta2003equivalence,wang20132}.
\end{proof}

We remind the reader of the similarity transform between the general tridiagonal model and the transpose symmetric variant, given in for instance \cite{santra2002non,JoglekarSaxena}. Using the mapping \eqref{MetricMapper} with this similarity transform, the metric operator \eqref{homomorphismMetric} is easily generalized to cases where the Hamiltonian is not transpose symmetric \cite{JoglekarSaxena}.

\subsection{Equivalent Hermitian Hamiltonian} \label{Equivalent Hamiltonian Section}
When the intertwining operator is postive-definite, i.e. the non-Hermitian Hamiltonian has purely real spectrum, we can construct an equivalent Dirac-Hermitian Hamiltonian as follows. In this section, we assume $|Z_m|< t_m$.  An Hermitian Hamiltonian, $h = h^\dag$, which is similar to $H$ is defined as
\begin{align}
h:&= \Omega H \Omega^{-1}=M^{1/2}HM^{-1/2},
\end{align}
where $\Omega=\sqrt{M}$ denotes the unique positive square root of $M$. Since the metric defined in \cref{homomorphismMetric} is block diagonal, the non-unitary similarity transform $\Omega$ can explicitly be calculated as 
\begin{align}
\Omega &= \begin{pmatrix}
\frac{\alpha}{2} \mathbbm{1}_m & \frac{Z_m^*}{\alpha t_m} \mathcal{P}_m \\
\frac{Z_m}{\alpha t_m} \mathcal{P}_m & \frac{\alpha}{2} \mathbb{1}_m
\end{pmatrix}=\frac{\alpha}{2}\mathbbm{1}_{2m}+\frac{1}{\alpha t_m}\left(\Re Z_m \sigma_x+\Im Z_m\sigma_y\right)\otimes\mathcal{P}_m,
\end{align}
where $\alpha =\sqrt{1+|Z_m|/t_m}+\sqrt{1-|Z_m|/t_m}$. Thus, the equivalent Hermitian Hamiltonian for the non-uniform open chain is given by 
\begin{align}
h_{ij} &= t'_i \delta_{i+1,j} + {t'_i}^*\delta_{i,j+1} + \Re z_i \delta_{i,j}\\
t'_i &:= \begin{cases}
\sqrt{|t_i t_{n-i}|} & \, \text{if }i \neq m \\
\frac{\Re Z_m}{Z_m}t_m + i\frac{\Im Z_m}{Z_m} \sqrt{t_m^2-|Z_m|^2} & \,\text{if }i = m
\end{cases}\label{equivHermHam} 
\end{align}

 Interestingly, this equivalent Hamiltonian remains tridiagonal, and is interpreted as \textit{local} to a one-dimensional chain.  This is in stark contrast to most other cases where the non-unitary similarity transform $\Omega$ generates long-range interactions thereby transforming a local, $\mathcal{PT}$-symmetric Hamiltonian $H$ with real spectra into an equivalent, non-local Hermitian Hamiltonian whose range of interaction diverges as one approaches the exceptional point degeneracy~\cite{Korff2008}. 

\subsection{\texorpdfstring{$\mathcal{C}$}{C} Symmetry}
Consider a  pseudo-Hermitian matrix $H$  with two intertwining operators, $\eta_1$ and $\eta_2$. It is straightforward to show that $\eta_2^{-1} \eta_1$ commutes with $H$~\cite{BiOrthogonal}. Owing to the $\mathcal{PT}$-symmetry and transpose symmetry of $H$ with open boundary conditions, a particular operator which commutes with $H$ is 
\begin{align}
\mathcal{C} := \frac{1}{\sqrt{t_m^2 - \gamma^2}} \mathcal{P} M(\i \gamma) = \frac{1}{\sqrt{t_m^2-\gamma^2}}\left( t_m\mathbbm{1}_{2m}+\gamma \sigma_y\otimes\mathcal{P}_m\right),
\label{generalC}
\end{align} 
In the domain where $H$ is $\mathcal{PT}$-unbroken and diagonalizable, the symmetry $\mathcal{C}$ is a Hermitian involution $\mathcal{C}^2 = \mathbb{1}$ which commutes with $\mathcal{PT}$. Due to the $\mathcal{C}$ symmetry and non-degeneracy of $H$ \cite{elliott1953characteristic}, the eigenvectors of $H$ are elements of the eigenspaces  of $\mathcal{C}$, 
\begin{align}
V_{\pm} = \text{span} \left\{t_m  e_k +(-\i\gamma\pm \sqrt{t^2_m - \gamma^2} ) e_{\bar{k}} \,|\, k \in \{1, \dots, m \}\right\}.
\end{align}
The coalescence of $V_+$ and $V_-$ as one approaches $\gamma =t_m$ is a signature that this is an exceptional point.

\subsection{Complexity of spectrum}
The central result of this section is that if $\gamma > t_m$, every eigenvalue has a nonzero imaginary part. This generalizes the result of \cite{MyFirstPaper} to the case with finite detuning and site-dependent tunnelling profiles. Suppose a given eigenvalue, $\lambda \in \sigma(H)$, is real, $\lambda \in \mathbbm{R}$. Since the geometric multiplicity of $\lambda$ is one \cite{elliott1953characteristic}, the corresponding eigenstate, $|\psi\rangle=\sum_{k=1}^{2m} \psi_k e_k$, is also an eigenstate of the antilinear operator $\mathcal{PT}$.  As a consequence of the eigenvalue equations, without loss of generality, the eigenstate can be taken to be real for all sites on the left half of the lattice, $\psi_k \in \mathbbm{R}\,\forall k \leq m$. By $\mathcal{PT}$ symmetry, there exists a phase $\chi \in [0, 2 \pi)$ such that $\psi_{\bar{k}} e^{i \chi} = \psi_k \,\forall k \leq m$.
With these observations in mind, the eigenvalue equations at the nearest-neighbour defect sites  $(m,m+1)$ are equivalent to
\begin{equation}
\begin{pmatrix}
(z_m - \lambda) \psi_m + t_{m-1} \psi_{m-1} & t_m \psi_m \\
t_m \psi_m & (z_{m+1} - \lambda) \psi_m + t_{m-1} \psi_{m-1}
\end{pmatrix}
\begin{pmatrix}
1 \\ e^{i \chi}
\end{pmatrix} =0.
\end{equation}
For this matrix to have a nontrivial kernel, its determinant must vanish.  However, if $\gamma > t_m$, the determinant is strictly positive. The contradictory assumption was taking $\lambda \in \mathbbm{R}$, thus, every eigenvalue is has a nonzero imaginary part when $\gamma > t_m$. 


\subsection{Degree of \texorpdfstring{$\mathcal{PT}-$}{PT-}Symmetry Breaking}
The reality of the spectrum of $H$ for $\gamma < |t_m|$ follows from the positive-definite nature of the explicitly constructed intertwining operator \cref{homomorphismMetric} in that domain. When $\gamma = t_m$,  the intertwining operator $M$ is no longer positive definite,  but is positive \textit{semi}-definite.  Consequently,  in this section we demonstrate at at $\gamma=t_m$  the spectrum of $H$ is still real,  but $H$ is no longer diagonalizable.

\begin{proposition}
\label{prop2}
When $\gamma = |t_m|$, $H$ has exactly $m$ orthogonal eigenvectors corresponding to real eigenvalues with algebraic multiplicity equal to two and geometric multiplicity equal to one.
\end{proposition}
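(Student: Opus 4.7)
The strategy is a reduction-plus-continuity argument: first exhibit $m$ eigenvectors by restricting to the kernel of the now positive-semi-definite metric $M(it_m)$, and then use the $\mathcal{C}$-symmetric block decomposition available for $\gamma<t_m$ to pin down the algebraic multiplicities by continuity of the characteristic polynomial.

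At $\gamma=t_m$ one is forced to set $Z_m=it_m$ in \cref{homomorphismMetric}, and the resulting $M$ is positive semi-definite with $m$-dimensional kernel $V_\infty := \mathrm{span}\{e_k - i e_{\bar k}:1\le k\le m\}$. Because $M$ intertwines $H$ and $H^\dagger$, the subspace $V_\infty=\ker M$ is $H$-invariant, and a short direct computation (using $t_k=t_{n-k}$, $z_k=z_{\bar k}^*$, and the cancellation $z_m+(-i)t_m = \Delta$) gives the representation of $H|_{V_\infty}$ in the basis $f_k:=e_k-ie_{\bar k}$ as the $m\times m$ Hermitian tridiagonal matrix
\[
H_\infty \;=\; \begin{pmatrix} z_1 & t_1 & & & \\ t_1 & z_2 & t_2 & & \\ & t_2 & \ddots & \ddots & \\ & & \ddots & z_{m-1} & t_{m-1} \\ & & & t_{m-1} & \Delta \end{pmatrix}.
\]
Since $H_\infty$ is Hermitian tridiagonal with nonzero off-diagonals, it has $m$ distinct real eigenvalues $\mu_1,\ldots,\mu_m$ and mutually orthogonal eigenvectors $u^{(j)}\in\mathbb{C}^m$. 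Lifting each to $v^{(j)}=\sum_k u^{(j)}_k f_k$ yields $m$ eigenvectors of $H$ with real eigenvalues, and their Dirac orthogonality is immediate from the identity $(e_k-ie_{\bar k})^\dagger(e_\ell-ie_{\bar\ell})=2\delta_{k\ell}$, which gives $\langle v^{(j)},v^{(i)}\rangle = 2\,(u^{(j)})^\dagger u^{(i)} = 0$ for $i\ne j$.

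To show that these $m$ eigenvalues exhaust $\sigma(H)$ each with algebraic multiplicity two and geometric multiplicity one, I would invoke the $\mathcal{C}$-eigenspace decomposition valid for $\gamma<t_m$. A calculation parallel to the one above, in the bases $\{v_k^\pm = t_m e_k+(-i\gamma\pm\sqrt{t_m^2-\gamma^2})e_{\bar k}\}_{k=1}^m$ of $V_\pm$, shows that $H|_{V_\pm}$ is the $m\times m$ Hermitian tridiagonal matrix $H_\pm(\gamma)$ identical to $H_\infty$ except for the $(m,m)$ entry, which becomes $\Delta\pm\sqrt{t_m^2-\gamma^2}$. Block-diagonality of $H$ in the basis $\{v_k^+\}\cup\{v_k^-\}$ then gives the factorization $\det(\lambda\mathbbm{1}-H) = \det(\lambda\mathbbm{1}-H_+)\cdot\det(\lambda\mathbbm{1}-H_-)$ for all $\gamma<t_m$. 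Continuity of the coefficients in $\gamma$, together with $H_\pm(t_m)=H_\infty$, limits this to $\det(\lambda\mathbbm{1}-H) = [\det(\lambda\mathbbm{1}-H_\infty)]^2$ at $\gamma=t_m$, so each eigenvalue of $H_\infty$ is a double root of the characteristic polynomial of $H$ and all $2m$ roots are accounted for. Geometric multiplicity one for each eigenvalue then follows automatically from the tridiagonal structure with nonzero sub- and super-diagonals (the Elliott result cited earlier in the paper).

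The main technical obstacle is the bookkeeping in deriving the explicit form of $H_\pm(\gamma)$: one must carefully use the $\mathcal{PT}$-symmetry constraints and the identities $t_{\bar k-1}=t_k$, $t_{\bar k}=t_{k-1}$ to check that every interior off-diagonal and diagonal entry collapses to the same real value in the two sectors, with the $\pm\sqrt{t_m^2-\gamma^2}$ dependence localized entirely in the corner entry. Once this is verified the rest is a continuity and dimension-count argument. An alternative route, which avoids the $\gamma<t_m$ detour by constructing $m$ explicit generalized eigenvectors forming rank-two Jordan chains rooted at the $v^{(j)}$, is possible but requires solving inhomogeneous tridiagonal equations by hand and is noticeably more delicate than reading the multiplicities off the factorization above.
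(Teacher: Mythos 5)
Your proof is correct, and its two halves relate to the paper's argument differently. The existence half --- restricting $H$ to the kernel of the positive semi-definite intertwiner and observing that the restriction is a real symmetric Jacobi matrix with $m$ orthogonal eigenvectors --- is essentially the paper's own argument: your $f_k = e_k - \i e_{\bar k}$ is proportional to the paper's $\tilde e_k$, though your version is slightly more careful in retaining the diagonal entries $z_1,\dots,z_{m-1},\Delta$ that the paper's displayed formula for $H\tilde e_j$ drops, and in noting that the nonzero off-diagonals force the $m$ real eigenvalues to be \emph{distinct} (the paper needs this too, implicitly, to get multiplicity exactly two). The multiplicity half is genuinely different. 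The paper establishes the identity $p_H(\lambda) = (\gamma^2 - t_m^2)\,p_{H_{m-1}}^2(\lambda) + \left[\Delta\, p_{H_{m-1}}(\lambda) + p_{H_m}(\lambda)\right]^2$ by multilinearity of the determinant and reads off that $p_H$ is a perfect square of a degree-$m$ monic polynomial at $\gamma = t_m$; you instead block-diagonalize $H$ over $V_+\oplus V_-$ for $\gamma < t_m$ into two Jacobi matrices $H_\pm(\gamma)$ differing only in the corner entry $\Delta \pm \sqrt{t_m^2-\gamma^2}$ (a decomposition I have checked: the key cancellation is $t_m^2 + c_\pm(\Delta - \i\gamma) = c_\pm(\Delta \pm\sqrt{t_m^2-\gamma^2})$ with $c_+c_- = -t_m^2$), and pass to the limit $\gamma\to t_m$ using that the product $\det(\lambda\mathbb{1}-H_+)\det(\lambda\mathbb{1}-H_-)$ is a polynomial in $t_m^2-\gamma^2$. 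Your route buys an explicit identification of the repeated factor as $\det(\lambda\mathbb{1}-H_\infty)$ and exhibits the EP as the coalescence of the two $\mathcal{C}$-sectors, at the cost of the continuity step and the corner bookkeeping you flag; the paper's determinant identity is more self-contained and holds at all $\gamma$, which it also exploits elsewhere. Both arguments finish identically with the cited geometric-multiplicity-one result for tridiagonal matrices, so I see no gap.
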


\begin{proof}
First, we prove that $H$ has at most $m$ linearly independent eigenvectors. To achieve this goal, consider the characteristic polynomial of $H$.  Denoting $H_i$ as the matrix formed by taking the first $i$ rows and columns of $H$,  denoting $p_A$ as the monic characteristic polynomial of a matrix $A$, and applying the linearity property of determinants, we find
\begin{align}
p_H(\lambda) = (\gamma^2-t_m^2) p_{H_{m-1}}^2(\lambda) + \left[\Delta p_{H_{m-1}}(\lambda) + p_{H_m}(\lambda) \right]^2.
\end{align}
When $\gamma=t_m$, $p_H$ is the square of a monic polynomial of degree $m$. Thus,  in this case,  each eigenvalue of $H$ has an algebraic multiplicity of at least two.  Since the geometric multiplicity of every eigenvalue of $H$ equals one \cite{elliott1953characteristic},  there are at most $m$ linearly independent eigenvectors.  

One simple proof that $H$ has at least $m$ eigenvectors when $\gamma^2 =t_m^2$ follows from applying theorem 1 of \cite{Drazin1962} to the positive semi-definite intertwiner $M$. We provide an alternative proof here.  Consider the action of $H$ on $\ker M$.  An orthonormal basis of $\ker M$ is
\begin{align}
\ker M &= \text{span} \{\tilde{e}_j\,|\,j \in \{1, \dots, m\} \} \\
\tilde{e}_j &= \frac{\i \gamma}{\sqrt{2} t_m} e_j + \frac{1}{\sqrt{2}} e_{\bar{j}}.
\end{align}
Then
\begin{align}
H \tilde{e}_j &= \begin{cases}
t_{1} \tilde{e}_2 & \text{ if } j = 1\\
t_{j-1} \tilde{e}_{j-1} + t_{j} \tilde{e}_{j+1} & \text{ if } j \in \{2, \dots, m-1\} \\
t_{m-1} \tilde{e}_{m-1} & \text{ if } j = m
\end{cases}. \label{tildeH}
\end{align}
Thus, $\ker M$ is an invariant subspace of $H$.  Define $\tilde{H}:\ker M \to \ker M$ by the condition $\tilde{H}(v) = H(v)$ for all $v \in \ker M$. Equation~\eqref{tildeH} implies that $\tilde{H}$ is Hermitian, so it has $m$ orthogonal eigenvectors whose corresponding eigenvalues are real.  These eigenvectors are also of $H$,  demonstrating $H$ has at least $m$ eigenvectors corresponding to real eigenvalues. We emphasize that proposition~\eqref{prop2} is valid for arbitrary,  $\mathcal{PT}$-symmetric tunnelling profiles and finite detuning. 
\end{proof}

\subsection{Exact spectra for detuned uniform chain}

In this section,  we outline the process by which the exact spectra presented in \cref{table} are obtained for a uniform ($t_j=t>0$) chain with a pair of detuned defects $z_m=z^*_{\bar{m}}$.  Generalizing the works of \cite{rutherford1948xxv,losonczi1992eigenvalues,Joglekar2010}, the eigensystem in this case was computed in \cite{ortega2019mathcal}. For our purposes, we need the (rescaled) characteristic polynomial,
\begin{align}
P_{n,m}(x;z') :&= \text{det}(2x I - H/t) \nonumber\\
&= U_n\left(x\right) - (z'_m + z'_{\bar{m}}) U_{n-m}\left(x\right)  U_{m-1}\left(x\right) \nonumber \\
&+ z'_m z'_{\bar{m}} U_{n-2m}\left(x\right) U_{m-1}\left(x\right)^2 = 0 \label{charPolySpecific},
\end{align} 
where $x=\lambda/2t$ is the dimensionless eigenvalue of $H$, and denote scaled defect strength by $z'_m=z_m/t$, and the Chebyshev polynomial of second kind is denoted by $U_n(x)=\sin\left[(n+1) \arccos x\right]/\sin (\arccos x)$.  The special cases with closed form spectrum in \cref{table} can be derived from simplifying \cref{charPolySpecific} with the identity
\begin{align}
U_{2m}(x) &= U^2_{m}(x) - U^2_{m-1}(x).
\end{align}

If the polynomials $U_n(x), U_{n-m}(x), $ and $U_{n-2m}(x)$ share a common zero,  then corresponding to this zero is an eigenvalue of $H$ which is independent of the complex defect strength $z_m, z_{\bar{m}}$.  Similarly,  in the zero-detuning case,  $z_m = - z_{\bar{m}}$,  $H$ has an eigenvalue independent of $z_m$ if $U_n(x)$ and $U_{n-2m}(x)$ share a common zero.  It also follows that this eigenvalue is real since $\sigma(H)$ is real in the Hermitian limit.  Thus, a subset of the spectrum is 
\begin{align}
g := \begin{cases}
\gcd(2 m,n+1) & \text{ if } z_m = -z_{\bar{m}}\\
\gcd(m,n+1) & \text{ otherwise}
\end{cases} \\
\sigma(H) \supseteq \left\{ 2t \cos \left(\pi \frac{k}{g}\right): k \in \{1, \dots g-1\} \right\}. 
\end{align} 
We also point out to the reader that $x=0$ is solution of the characteristic polynomial independent of $z_m$ whenever $n$ is odd and $\Delta=0$.  It represents the zero-energy state that is decouped from the defect potentials due to its vanishing weigths on the defect sites. 
\begin{figure}
\centering
\includegraphics[width = 80mm]{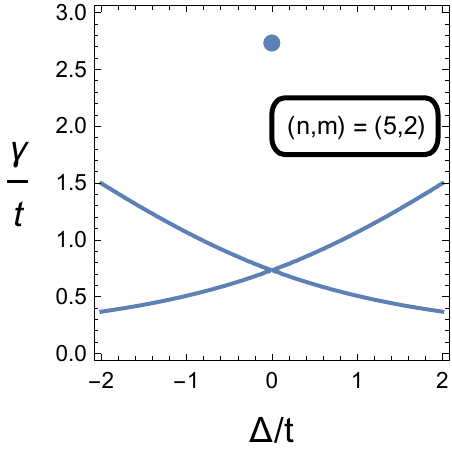}
\caption{Exceptional points (blue) of $H$ with $t_i = t$, $t_L = t_R = 0$, $(n,m) = (5,2)$ as a function of the defect strength $\gamma/t$ and $\Delta/t$.  At small $\gamma/t$,  all five eigenvalues are real.  The points $z_m=0+i\gamma= i(\sqrt{3} \pm 1)t$ are a crunode $(-)$ and an acnode $(+)$, respectively,  with eigenvalues $\sigma(H) = \{0, -(\pm 1)^{1/2}3^{1/4}, (\pm 1)^{1/2} 3^{1/4}\}$. In addition to the zero eigenvalue, the remaining eigenvalues at $z_m = i(\sqrt{3} \pm 1) t$ have algebraic multiplicity 2 and geometric multiplicity 1.}
\label{fig:exactnm}
\end{figure}
Figure~\ref{fig:exactnm} shows the exceptional points for an $n=5$ site chain with defects at sites $m=2$ and $\bar{m}=4$.  Although this is a uniform chain,  since it has odd number of sites,  the nearest defects are still two sites apart.  Therefore, the exact results presented earlier for nearest-neighbour defects do not apply.  

\section{Non-Hermitian SSH Model with farthest defects} \label{SSH}

In this section, we enforce the parametric restrictions $t_k=t_{k'}>0$ for $k'=k\mod 2$ and defects on sites $m=1$ and $\bar{m}=n$. Our key results regard the number of real eigenvalues and the set of exceptional points.

\subsection{Eigensystem Solution} \label{esystemSection}
The characteristic polynomial of the tridiagonal matrix corresponding to the Hermitian SSH model with open boundary conditions~\cite[Remark 5]{Elsner1967} is given by 
\begin{align}
D_n(t_1,t_2) :&= (t_1 t_2)^{-\floor{n/2}}\text{det} \begin{pmatrix}
\lambda & -t_1 & 0 & \dots  \\
-t_1 & \lambda & -t_2 & \ddots  \\
0 & -t_2 & \lambda & -t_1  \\
\vdots & \ddots & -t_1 & \ddots \\
\end{pmatrix} \\
&= \begin{cases}
\lambda U_k(Q) & \text{ if } n = 2k + 1 \\
U_k Q + \frac{t_2}{t_1} U_{k-1}(Q) & \text{ if } n = 2k.
\end{cases}
\end{align}
where 
\begin{align}
Q := \frac{\lambda^2 -( t_1^2 + t_2^2)}{2 t_1 t_2}.
\end{align}

To generalize this result to a non-Hermitian model ($z_k\neq 0$) with corner elements ($t_L,t_R\neq 0$), we use the linearity property of determinants for rows and columns. Computing the characteristic polynomial for $H(z_1,z_n,t_L,t_R)$ can thus be reduced to the problem of finding $D_n(t_1,t_2)$. The result is summarized in \cref{charPolyTable}. The case for an open chain,  $t_L =0= t_R$, was known to \cite{da2007characteristic}, and the case $t_1=t_2$  was known to \cite{YUEH2008}.

\begin{table*}[htp!]
\centering
\begingroup
\setlength{\tabcolsep}{8pt} 
\renewcommand{\arraystretch}{2} 
\begin{tabular}{|l|l|}
\hline
Constraints & $(t_1 t_2)^{-\floor{n/2}} \text{det} (\lambda I - H)$ \\
\hhline{|=|=|}
$n = 2k $ & 
$\begin{array}{l}
U_{k}(Q) + \left(\dfrac{z_1 z_n - t_L t_R}{t_2^2} \right)U_{k-2}(Q) \\
+ \left(\dfrac{t_2^2 - \lambda (z_1 + z_n) + z_1 z_n - t_L t_R}{t_1 t_2} \right) U_{k-1}(Q)  - \dfrac{t_L + t_R}{t_2} 
\end{array} $ \\
\hline
$n = 2k +1$ & 
$\begin{array}{l}
\left(\lambda - z_1 - z_n\right) U_k(Q) -( t_L+ t_R) \\
+  \left(\dfrac{\lambda(z_1 z_n - t_L t_R)- z_1 t_1^2 - z_n t_2^2}{t_1 t_2} \right)U_{k-1}(Q)
\end{array}.$ \\
\hline
\end{tabular}
\endgroup
\caption{Characteristic polynomial of $H(z_1,z_n,t_L,t_R)$ for even and odd SSH chains.}
\label{charPolyTable}
\end{table*}

To find the eigenvector  corresponding to a root of \cref{charPolyTable} we  express eigenvalue equation as a linear recurrence relation,
\begin{align}
t_i\psi_{i+1} = \lambda \psi_i - t_{i-1} \psi_{i-1} &\quad& \forall i \in \{2, \dots n-1\}.
\end{align}
Solving this linear recurrence relation is equivalent to computing the $2 \times 2$ matrix power,
\begin{align}
\begin{pmatrix}
\psi_{2k} \\
\psi_{2k - 1}
\end{pmatrix} &= \begin{pmatrix}
\frac{(\lambda^2-t_2^2)}{t_1 t_2} & -\frac{\lambda}{t_2} \\
\frac{\lambda}{t_2} & - \frac{t_1}{t_2}
\end{pmatrix}^{k-1} \begin{pmatrix}
\psi_2 \\
\psi_1
\end{pmatrix}.
\end{align}
Using the following expression for  powers of invertible $2\times 2$ matrix $A$~\cite{Ricci1975}, 
\begin{align}
A^k &=(\det A)^{k/2}\left[-U_{k-2}(y)\mathbbm{1}_2+ U_{k-1}(y)\frac{A}{\sqrt{\det A}}\right],
\end{align}
where $y=\text{tr} A/2\sqrt{\det A}$ is the dimensionless argument, we arrive at
\begin{align}
\begin{pmatrix}
\psi_{2k} \\
\psi_{2k - 1}
\end{pmatrix} &= \begin{pmatrix}
U_{k-1}(Q) + \frac{t_1}{t_2} U_{k-2}(Q) & -\frac{\lambda}{t_2} U_{k-2}(Q) \\
\frac{\lambda}{t_2} U_{k-2}(Q) & -\left(\frac{t_1}{t_2} U_{k-2}(Q) + U_{k-3}(Q) \right)
\end{pmatrix} \begin{pmatrix}
\psi_2 \\
\psi_1
\end{pmatrix}
\end{align}
These results are valid for $2k\leq n$. In addition, by using the equations that relate $\psi_1,\psi_2,\psi_n$ with tunnelling $t_L$ and $\psi_1,\psi_{n-1},\psi_n$ with tunnelling $t_R$, we get 
\begin{align}
\left[z_1 - \lambda + \lambda \frac{t_L}{t_2} U_{m-2}(Q)\right]\psi_1= \left[t_L U_{m-1}(Q) + \frac{t_L t_1}{t_2} U_{m-2}(Q)-t_1 \right] \psi_2,
\end{align}
thereby determining the eigenvector modulo normalization. On the other hand, if the identity holds due to vanishing prefactors of $\psi_1$ and $\psi_2$, then the state corresponding to that $\lambda$ is doubly degenerate. 

\subsection{Eigenvalue Inclusion Results}

This section is devoted to finding subsets of the complex plane which contain some or all of the eigenvalues of $H$. As a consequence, we will find a subset of the $\mathcal{PT}-$unbroken and $\mathcal{PT}$-broken domains.  A subset of the $\mathcal{PT}$-unbroken domain is found by applying the intermediate value theorem to the characteristic polynomial. To simplify results, we define
\begin{align}
\mu_k=| t_1+ t_2 e^{(2i\pi/n)k}|\geq 0
\end{align}
and denote the intervals with endpoints $(-1)^{s_1} (t_1 + (-1)^{s_2} t_2)$ and $(-1)^{s_1} \text{sgn} (t_1 + (-1)^{s_2} t_2) \mu_1$ with $s_1, s_2 \in \{0,1\}$  as $I( (-1)^{s_1}, (-1)^{s_2} )$.

\begin{proposition} \label{inclusionTheorem}
Consider an even chain with  $n = 2k$  and assume $t_L = - t_R$. This realization includes an open chain ($t_L=0=t_R$), a closed chain with purely imaginary, Hermitian coupling ($t_L=i|t|=-t_R$), and a closed, non-Hermitian chain ($t_L=-t_R\in\mathbb{R}$). If $t_2^2=z_1 z_n - t_L t_R$, then
\begin{align}
\sigma(H) = \left\{ \pm \mu_j\,|\,1\leq j \leq k-1 \right \}\cup\left\{\frac{z_1 + z_n}{2}\pm \sqrt{t_1^2 - t_2^2 + \left(\frac{z_1 + z_n}{2} \right)^2}\,\right\}. \label{exactSpectrum}
\end{align}
If $t_2^2 \neq z_1 z_n - t_L t_R$, then the intervals $(\mu_{j+1}, \mu_j)$ and $(-\mu_j, -\mu_{j+1})$ each contain one eigenvalue of $H$ for $1\leq j\leq (k-1)$. Constraining other parameters as specified below guarantees the existence of additional real eigenvalues in corresponding intervals, 
\begin{align}
1 + k \left(1 \pm_2 \frac{t_2}{t_1} \right)\frac{\left(\Delta \mp_1 t_2\right)^2 + \gamma^2 -t_L t_R}{t_2^2 - \Delta^2 - \gamma^2 + t_L t_R}\geq 0 \, &\Rightarrow \, \sigma(H) \cap I(\pm_1 1,\pm_2 1) \neq \emptyset \label{ineq}. 
\end{align}
\end{proposition}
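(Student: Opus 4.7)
The plan is to work directly with the characteristic polynomial for the even SSH chain given in \cref{charPolyTable}, specialized to $t_L = -t_R$ so that the constant term $-(t_L + t_R)/t_2$ drops out. Introducing the abbreviations $\alpha := (z_1 z_n - t_L t_R)/t_2^2$ and $\sigma := z_1 + z_n$, the monic rescaled polynomial becomes
\begin{align}
p(\lambda) = U_k(Q) + \alpha\, U_{k-2}(Q) + \frac{(1+\alpha)\,t_2^2 - \lambda \sigma}{t_1 t_2}\, U_{k-1}(Q).
\end{align}
Under $\mathcal{PT}$-symmetry $\sigma = 2\Delta$ and $z_1 z_n = \Delta^2 + \gamma^2$, so $\alpha$ is real in all three realizations listed.

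For the exact-spectrum claim ($\alpha = 1$), I would invoke the three-term Chebyshev identity $U_k(Q) + U_{k-2}(Q) = 2Q\, U_{k-1}(Q)$ together with $2 t_1 t_2 Q = \lambda^2 - t_1^2 - t_2^2$ to factor
\begin{align}
p(\lambda) = \frac{U_{k-1}(Q)}{t_1 t_2}\bigl[\lambda^2 - \lambda \sigma + t_2^2 - t_1^2 \bigr].
\end{align}
The zeros of $U_{k-1}(Q)$ are at $Q = \cos(\pi j/k)$, $j = 1,\dots,k-1$, and the defining relation for $Q$ converts each into $\lambda^2 = t_1^2 + t_2^2 + 2 t_1 t_2 \cos(\pi j/k) = |t_1 + t_2 e^{i\pi j/k}|^2 = \mu_j^2$, yielding $\pm\mu_j$. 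The quadratic factor supplies the remaining two roots and precisely matches the closed form displayed in \cref{exactSpectrum}.

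For the interval-inclusion claim ($\alpha \neq 1$), I would apply the intermediate value theorem after evaluating $p$ at $\lambda = \pm \mu_j$. At these points $Q = \cos(\pi j/k)$, so the standard Chebyshev values $U_{k-1}(Q) = 0$, $U_k(Q) = (-1)^j$, and $U_{k-2}(Q) = (-1)^{j+1}$ give
\begin{align}
p(\pm \mu_j) = (-1)^j(1 - \alpha).
\end{align}
Since $\mu_1 > \mu_2 > \cdots > \mu_{k-1}$ is strictly decreasing and $1 - \alpha \neq 0$, the signs alternate across the points $\pm\mu_j$, forcing a real root in each of the $2(k-1)$ open intervals $(\mu_{j+1},\mu_j)$ and $(-\mu_j,-\mu_{j+1})$.

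For the extra real eigenvalues in $I(\pm_1 1, \pm_2 1)$, I would evaluate $p$ at the outer endpoint $\lambda = \pm_1(t_1 +_2 t_2)$, where $Q = \pm_2 1$ and the identity $U_n(\pm 1) = (\pm 1)^n(n+1)$ turns $p$ into an elementary expression. Substituting $\sigma = 2\Delta$ and $z_1 z_n - t_L t_R = \alpha t_2^2$, I expect the ratio $p(\pm_1(t_1 +_2 t_2))/p(\pm\mu_{1\text{ or }k-1})$ to simplify, after factoring out $t_2^2(1-\alpha)$, exactly to the left-hand side of \cref{ineq}, namely
\begin{align}
1 + k\Bigl(1 \pm_2 \tfrac{t_2}{t_1}\Bigr)\frac{(\Delta \mp_1 t_2)^2 + \gamma^2 - t_L t_R}{t_2^2 - \Delta^2 - \gamma^2 + t_L t_R}.
\end{align}
Nonnegativity of this quantity is the condition for the two endpoint values of $p$ to have opposite signs, whence IVT produces an additional real root in $I(\pm_1 1, \pm_2 1)$. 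The main obstacle I anticipate is the bookkeeping in this last step: carefully tracking the four sign choices $\pm_1,\pm_2$, the orientation of each interval (since $\mu_1 < t_1 + t_2$ but the ordering of $t_1 - t_2$ versus $\pm\mu_{k-1}$ depends on signs), and ensuring the algebraic simplification of $p(\pm_1(t_1 +_2 t_2))$ lines up cleanly with the claimed inequality. The rest of the argument is structural: one-sided use of IVT on a continuous real polynomial whose values on a finite comparison grid have been computed explicitly from known Chebyshev evaluations.
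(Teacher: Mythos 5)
Your proposal is correct and follows essentially the same route as the paper's proof: specialize the characteristic polynomial of \cref{charPolyTable} to $t_L=-t_R$, factor out $U_{k-1}(Q)$ when $z_1 z_n - t_L t_R = t_2^2$ (the paper does this via $U_{k}=QU_{k-1}+T_k$, $U_{k-2}=QU_{k-1}-T_k$, which is the same computation as your three-term identity), and otherwise apply the intermediate value theorem using the sign values $(-1)^j(1-\alpha)$, $\alpha=(z_1z_n-t_Lt_R)/t_2^2$, at $\lambda=\pm\mu_j$ together with $U_n(\pm 1)=(\pm1)^n(n+1)$ at the band edges $\pm_1(t_1\pm_2 t_2)$. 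One minor caveat: the sign alternation at the $k-1$ points $\mu_1>\cdots>\mu_{k-1}$ by itself only yields roots in $2(k-2)$ of the intervals, since $U_{k-1}(Q)\neq 0$ at $\lambda=\pm\mu_k=\pm|t_1-t_2|$, so the $j=k-1$ intervals are precisely the ones controlled by the endpoint inequalities \cref{ineq}; the paper's own proof makes the same restriction by limiting $j$ to $\{1,\dots,k-2\}$.
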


\begin{proof}
We utilize an alternative expression to the characteristic polynomial. Using the identity
\begin{align}
U_{n\pm 1}(x) &= x U_n(x) \pm T_{n + 1}(x),
\end{align}
where $T_n(x)$ is the Chebyshev polynomial of the first kind, $T_n(x) = \cos (n \arccos x)$, we can rewrite the characteristic polynomial, with $n=2k$, as
\begin{align}
(t_1 t_2)^{-k} \text{det} (\lambda I - H) &= T_k(Q) \left(1- \frac{z_1 z_n - t_L t_R}{t_2^2}  \right) - \frac{t_L + t_R}{t_2} \nonumber \\
&+ \left(1 + \frac{z_1 z_n - t_L t_R}{t_2^2}\right)Q U_{k-1}(Q) \nonumber \\
&+ \left(\frac{t_2^2 - \lambda (z_1 + z_n) + z_1 z_n - t_L t_R}{t_1 t_2}  \right) U_{k-1}(Q) 
\label{altPoly}
\end{align}
Equation~\eqref{exactSpectrum} follows from the observation that the first term of \cref{altPoly} vanishes when $z_1 z_n - t_L t_R = t_2^2$ and the second one vanishes when $t_R=-t_L$. 

Next, consider the case $z_1 z_n - t_L t_R \neq t_2^2$, and $n\geq 4$. The sign of the characteristic polynomial is different at the endpoints of the intervals $(\mu_j, \mu_{j+1})$ and $(-\mu_{j+1},-\mu_j)$ for all $j \in \{ 1, \dots, k-2 \}$, so there exists a real eigenvalue of $H$ inside each of these intervals. The inequalities of \cref{ineq} follow from considering the sign of the characteristic polynomial at the endpoints of the intervals $I(\pm_1 1, \pm_2 1)$.
\end{proof}

The inequalities of \cref{ineq} when $t_1^2=t_2^2$ were known to  \cite[eq. (63-64)]{Willms2008}  while special cases of \cref{exactSpectrum} were presented in \cite{rutherford1948xxv,Willms2008,Korff2008,guo2016solutions}. Two of the inequalities \cref{ineq} are satisfied if $t_2^2 \geq z_1z_n-t_Lt_R=\Delta^2 + \gamma^2-t_L t_R$, and all  four inequalities are satisfied if $t_1 \geq t_2$. Thus, $\sigma(H) \subset \mathbb{R}$ when $\Delta^2 + \gamma^2 - t_L t_R \leq t_2^2 \leq t_1^2$. 

Now we focus on the complex part of the spectrum. A subset of the $\mathcal{PT}$-broken domain is found as an application of the Brauer-Ostrowski ovals theorem \cite{Brauer1947,Ostrowski1937,Varga2004}. Let $C(w_1,w_2; b) = \{w \in \mathbb{C} \,|\, |w - w_1|  \cdot |w-w_2| \leq b\}$ denote a Cassini oval. By the Brauer-Ostrowski ovals theorem, all eigenvalues of $H$ are elements of the union of  Cassini ovals, specifically
\begin{align}
\sigma(H) \subseteq &C(0,0;(t_1+t_2)^2) \cup C(0,z_n; (t_1+t_2)(t_1+|t_R|)) \cup \nonumber \\
&C(z_1, 0; (t_1+t_2)(t_1+|t_L|)) \cup C(z_1,z_n;(t_1+|t_L|)(t_1+|t_R|)). \label{Cassini}
\end{align}
Since eigenvalues are  continuous in the arguments of a continuous matrix function \cite{Kato1995}, if the union of the Cassini ovals in \cref{Cassini} contains disjoint components, then each component contains at least one eigenvalue of $H$. In particular, if both of the inequalities
\begin{align} 
\left[|z_1|^2-(t_1+t_2)^2\right] \gamma  &> |z_1| (t_1+|t_L|)(t_1+|t_R|), \label{unbrokenIneq1} \\
2(r_1+t_2) &< \left(|z_1| + \sqrt{|z_1|^2 - 4t_1^2 - 4 \min\{|t_L|^2, |t_R|^2\} }\right)\label{unbrokenIneq2}
\end{align}
hold, then there exist disjoint components containing the points $z_1$ and $z_n=z_1^*$, implying the existence of at least two eigenvalues with nonzero imaginary parts. 

\subsection{Topological Phases of the even SSH chain with open boundary conditions}
The eigenvectors of tight-binding models are characterized as either \textit{bulk} or \textit{edge} states based on how their inverse participation ratio scales with the chain size $n$~\cite{JoglekarSaxena,Joglekar2011z}. Roughly, the bulk eigenstates  are spread over most of the chain irrespective of the chain size, whereas edge states remain exponentially localized within a few sites even with increasing chain size. Observing
\begin{align}
U_n(Q) = \frac{(Q + \sqrt{Q^2 - 1})^{n+1} - (Q - \sqrt{Q^2 - 1})^{n+1}}{2 \sqrt{Q^2 - 1}},
\end{align}
we see the sequence of Chebyshev polynomials $U_n(Q)$ is oscillatory in $n$ for $|Q|\leq 1$ and scales exponentially with $n$ otherwise. Thus, existence of non-trivial  topological phase with edge-localized states is equivalent to existence of  eigenvalues which do not satisfy $|Q|\leq 1$. Thus, for this particular model, in the thermodynamic limit, the $\mathcal{PT}$-broken phase is equivalent to  topologically nontrivial phase, as complex eigenvalues correspond to edge states.

\subsection{Exceptional Points for the Critical SSH Chain}

This section will locate the exceptional points of a uniform chain with defect potentials at the edges of an open lattice, so $t_1 = t = t_1$. Given that the spectrum is exactly solvable in the case $z_1 z_n = t^2$, we will assume $z_1 z_n \neq t^2$ for this section.

The theory of resultants~\cite{Gelfand1994}, applied to the characteristic polynomial $P(\lambda):=\text{det}(\lambda I - H)$, can be used to analytically determine the exceptional points of $H$. The resultant of two monic polynomials $f(x)$ and $g(x)$ with degrees $F$ and $G$ respectively can be defined as
\begin{align}
\text{Res}_x(f, g) &:=\prod^{\text{F}}_{i=1} g(x_i)
\end{align}
where $\{x_i\}$ denotes the full set of roots of $f(x)$. The resultant vanishes if and only if its inputs share one or more roots \cite{Gelfand1994}. In particular, the Hamiltonian $H$ has $k$ degenerate eigenvalues if and only if 
\begin{align} 
\text{Res}_\lambda \left(P(\lambda), \pdv[i]{P(\lambda)}{\lambda}\right) &= 0 \,\,\, \forall i \in \{1, \dots, k\} \label{algCurve1}\\
\text{Res}_\lambda \left(P(\lambda), \pdv[k+1]{P(\lambda)}{\lambda}\right) &\neq 0. \label{algCurve2}
\end{align} 
For the generic Hamiltonian,  \cref{algCurve1,algCurve2} are not satisfied  for all parameters. Thus, in the generic case where only two (but not more) eigenvalues become degenerate, finding the EPs reduces to locating Hamiltonian parameters and an eigenvalue $\lambda_0$ such that 
\begin{align}
P(\lambda_0) &= 0=\pdv{P(\lambda)}{\lambda}\Bigr|_{\lambda=\lambda_0}. 
\end{align}

Computation of the general set of exceptional points reduces to finding the resultants in  \cref{algCurve1,algCurve2}, derived from the characteristic polynomial $P_{n,1}$, \cref{charPolySpecific}. The following properties of Chebyshev polynomials are used in subsequent calculations~\cite{olver2010nist,abramowitz1972handbook,Mason1984}
\begin{align}
U_n(x) &= 2x U_{n-1}(x) - U_{n-2}(x), \\
\frac{dU_n}{dx} &=\dfrac{(n+1) U_{n+1}(x)-nxU_n(x)}{1-x^2},\\
U_{n-1}\left(\frac{x+x^{-1}}{2} \right)&= \frac{x^n-x^{-n}}{x-x^{-1}}. 
\label{Joukowski}
\end{align}
The resultant of Chebyshev polynomials, calculated in~\cite{Jacobs2011,Louboutin2013}, shows that $\text{Res}\left(U_n, U_m\right)\neq 0$ if $n$ and $m$ are co-prime and $\text{Res}\left(U_n, U_m\right)= 0$ if $n+1$ and $m+1$ are not co-prime. Due to the denominator $(1-x^2)$ in  the derivative of Chebyshev polynomials, \cref{Joukowski}, it is convenient to work with $\text{Res}_x\left[P_{n,1}(x), (1-x^2) dP_{n,1}(x)/dx\right]$ instead. To simplify this resultant, we use  the identity $(1-x^2) dP_{n,1}/dx=A P_{n,1}(x) + B U_{n-1}(x) $ where the $H$-dependent prefactors are given by 
\begin{align}
A(x)&= n \frac{z'_1 + z'_n - x (1+z'_1 z'_n)}{1-z'_1 z'_n },\label{a}\\
B(x) &= 2 z'_1 z'_n-x(z'_1+z'_n)+ (n+1)(1-z'_1 z'_n)+\frac{n(2 x-z'_1-z'_n)(2x z'_1 z'_n -z'_1-z'_n)}{1-z'_1 z'_n}.\label{b}
\end{align}

We remind the reader that  the dimensionless defect strengths $z'_1,z'_n$ in \cref{a,b} are scaled by the uniform tunnelling amplitude $t$. Denoting the two roots of $B(x)$, \cref{b}, as $b_{\pm}$, we obtain the resultant,

\begin{align}
P_{n,1}(1)P_{n,1}(-1)\text{Res}_{x}\left(P_{n,1},\frac{dP_{n,1}}{dx}\right)\propto
P_{n,1}(b_+)P_{n,1}(b_-).
\label{EPSurface}
\end{align}

The resultant of \cref{EPSurface} vanishes if and only if $(z_1, z_n)$ is an exceptional point, with a single eigenvector, as long as the corresponding $H(z_1,z_n)$ is not Hermitian; if $H$ is Hermitian, then a vanishing resultant merely denotes a doubly-degenerate eigenvalue which supports two linearly independent eigenvectors. Note that \cref{EPSurface} does not yield insight for parameters $(z_1,z_n)$ that are tuned such that $P_{n,1}(\pm 1) = 0$ for arbitrary $n$. However, we readily identify $P_{n,1}(\pm 1)=0$ if and only if $(n-1)z'_1z'_n\mp n(z'_1+z'_n)+(n+1)=0$. In this case, exceptional points occur when $\pm 1$ is a double root of $P_n$, which occurs when $z'_1 = {z'_n}^* \in \left\{\frac{2-2n^2 + i\sqrt{3 n^2 - 3}}{(2n-1)(n-1)}, \frac{2n^2-2+ i\sqrt{3 n^2 - 3}}{(2n-1)(n-1)} \right\}$.

This analytical expression also allows us to extract the dependence of the $\mathcal{PT}$-threshold value on the detuning, where $z_1=\Delta+i\gamma=z_n^*$. Using the expansion of the roots $b_\pm$ of the quadratic expression $B(x)=0$, \cref{b}, in the limit $\Delta/t\gg 1$, and applying the method of dominant balance \cite{bender2013advanced} gives 
\begin{align}
P_{n,1}(b_+) P_{n,1}(b_-)= \frac{\Delta^2}{n^2 t^2}\left(1-\frac{1}{n}\right)^{n-2}\left[\frac{\gamma^2\Delta^{2(n-2)}}{t^{2n}}-1\right] + O(1,\gamma^4 \Delta^{2n-4}).
\label{rhoInf}
\end{align}
It follows that the EPs determined by vanishing of  \cref{rhoInf} satisfy $\gamma_\text{EP}(\Delta)=t^{n-1}/\Delta^{(n-2)}$ in the limit $\Delta/t\gg 1$. Figure~\ref{PTBreaking} shows the numerically obtained EP contours for this problem in the $(\gamma/t,\Delta/t)$ plane with varying chain sizes $n$. When $n=2$, the $\Delta$ term contributes $\mathbbm{1}_2$ to the Hamiltonian and therefore does not change the threshold $\gamma_\text{EP}(\Delta)=t$.

The zero detuning threshold is \cite{Korff2008,jin2009solutions}
\begin{align}
\gamma_{EP}(0)/t = \begin{cases}
\sqrt{1+1/n} & \text{if } n \text{ is odd} \\
1 & \text{if } n \text{ is even}
\end{cases}.
\end{align}
This exceptional point corresponds to a zero mode with geometric multiplicity one, and algebraic multiplicity which is three in the odd case and two in the even case.

\begin{figure}[htp!]
\centering
\includegraphics[width=\textwidth]{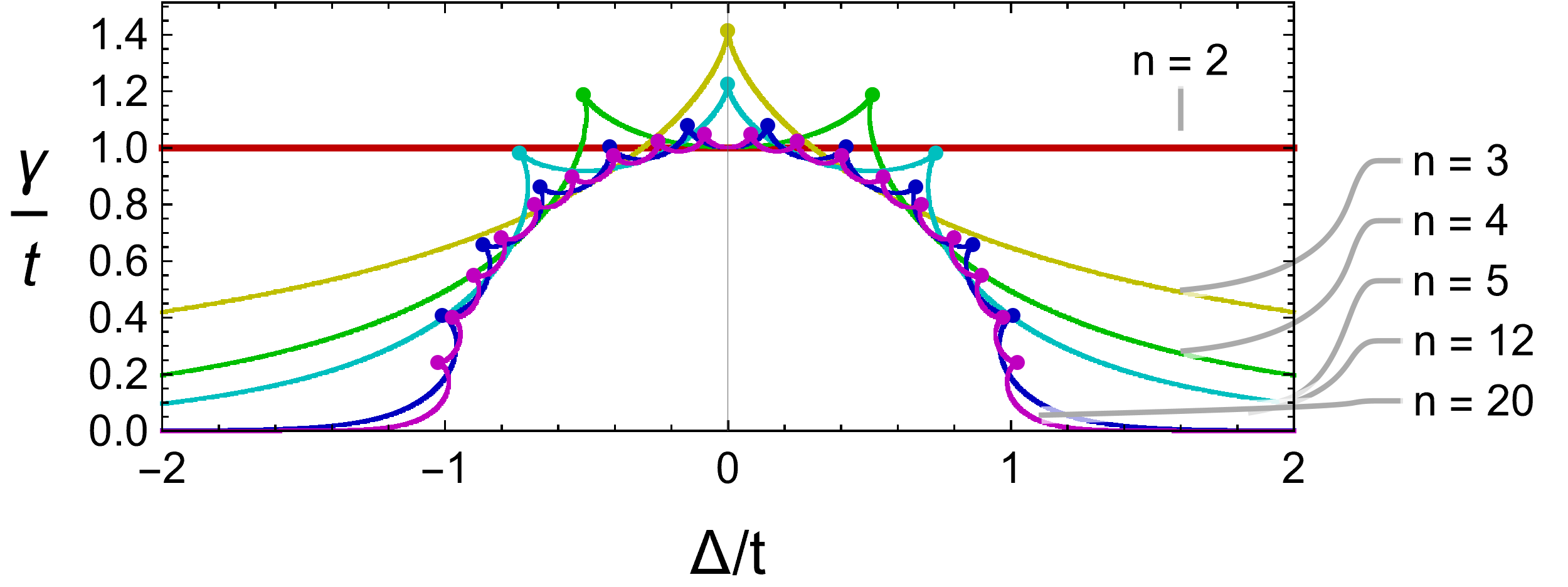}
\caption{Exceptional points (EPs) of a uniform, n-site, tight-binding, open chain, \cref{TriDiag}, with defects $z_1=\Delta+i\gamma=z_n^*$ at its end points. $\mathcal{PT}$-symmetry breaks as one passes from below the  EP contour to above. Each contour,  an algebraic curve, has $(n-2)$ cusp singularities marked by filled circles. The cusp singularities correspond to third order exceptional points (EP3s) while the rest of the contours are exceptional points of second order (EP2s). Our numerics suggest that the EP contours are one-to-one functions of $\text{arg}(z_1)$. As $n \rightarrow \infty$, the $\mathcal{PT}$-unbroken region approaches the unit disk $|z_1|/t=1$. The curves for $n=3,4,5$ are in \cite{Ruzicka2015}.}
\label{PTBreaking}
\end{figure}

\subsection{Locating EP3s in contours of EP2s}

As a consequence of the Newton-Puiseux theorem, given an $n \times n$ matrix which is a polynomial in one parameter, $\theta$, the eigenvalues, $\lambda_i$ can be expanded as a Puiseux series in $\theta$,
\begin{align}
\lambda_i(\theta)=\lambda_i(\theta_0)+ \sum^\infty_{j = 1} \epsilon_{ij}(\theta_0) (\theta-\theta_0)^{j/k(i)},
\end{align}
where $k(i) \in \mathbb{N}$. To guarantee a real spectrum in a neighbourhood of $\theta_0$, as is the case for a Hermitian matrix,  the condition $k(i)=1$ is necessary and sufficient. On the other hand, if $\lambda_i(\theta_0)$ is an EP of order $N$, then $k(i)=N$. The sensitivity of the spectrum to perturbations in $\theta_0$ is quantified by 
\begin{align}
\tau(\theta_0) :=\max\{\epsilon_{i1}(\theta_0)|k(i)=\sup(k(1),\cdots,k(n))\}
\end{align}
If $\theta_0$ parametrizes an EP contour which contains EPs with different values of $k$, the corresponding $\tau(\theta_0)$ must diverge as one approaches a point with an increased $k$ value. We now consider perturbations of the eigenvalues of the $\mathcal{PT}$-symmetric case of $H$ for $m = 1$ at the exceptional points, \cref{PTBreaking}. If the tangent to an algebraic curve of EPs is unique and 2-directional, then perturbations along the tangents to the contour have $k = 1$ and result in real eigenvalues. In orthogonal direction, there exists exactly one pair of eigenvalues which displays a real-to-complex-conjugtes transition, so $k=2$. Only at the cusp singularities is $k = 3$ satisfied. To show this, in \cref{tau} we plot the coefficient of the square-root term $\tau(\theta)$ as a function of angle in the $(\Delta,\gamma)$ plane, i.e. $\theta=\text{arg}(z_1)$, for $\theta\in[0,\pi/2]$. As is expected, $\tau(\theta)$ diverges at non-uniformly distributed cusp points. 

\begin{figure}[htp!]
\centering
\includegraphics[width =0.5\textwidth]{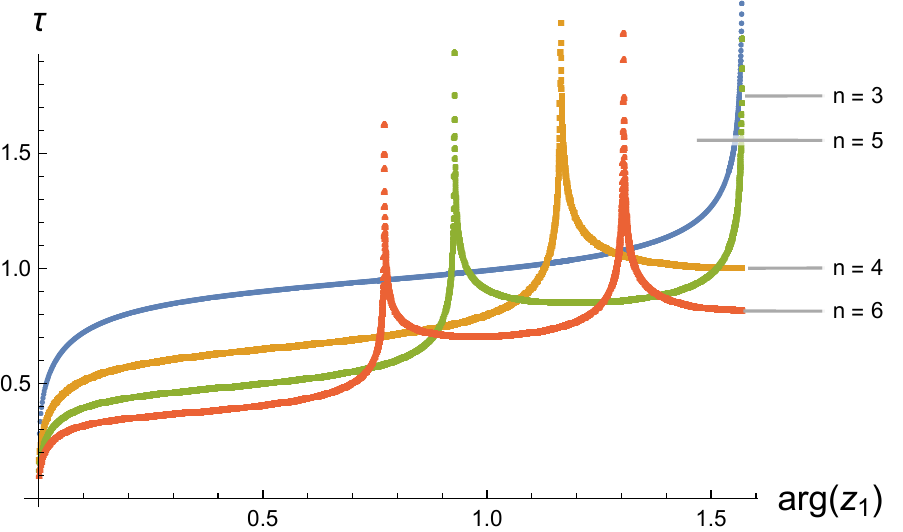}
\caption{Coefficient $\tau(\theta)$ of the square-root term in the perturbative expansion of EP2 degenerate eigenvalue as a function of $\theta=\text{arg}(z_1)$. Since  $\tau(\theta)=\tau(\pi-\theta)=\tau(-\theta)$, the range is confined to $[0,\pi/2]$. The divergences  are signatures of EP3s  where the eigenvalue expansion is expressed through cube-roots instead of square-roots.}
\label{tau}
\end{figure}


To exlore the generality of our observation, we next consider  the SSH Hamiltonian with detuned defects $(z_1,z_n)$ as a function of three dimensionless parameters, $\Delta/t_2, \gamma/t_2,$ and $t_1/t_2$. In this case, the EPs form a 2-dimensional surface, with ridges that correspond to EP3s.  At $\Delta=0$, these ridges intersect giving rise to EP4 cusp singularities. 
\begin{figure}[htp!]
   \centering
   \begin{subfigure}[b]{0.45\textwidth}
      \centering
      \includegraphics[width = \textwidth]{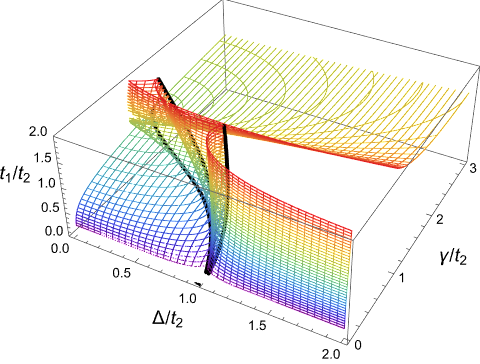}
      \caption{}
   \end{subfigure}
   \hfill
   \begin{subfigure}[b]{0.45\textwidth}
      \centering
      \includegraphics[width = \textwidth]{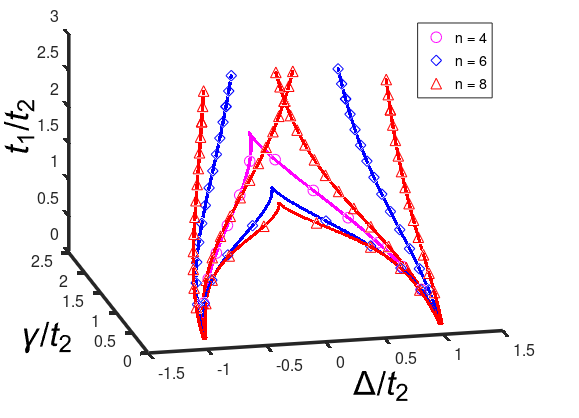}
      \caption{}
   \end{subfigure}
   \caption{(a) EPs of an open SSH chain with end-defects and $n = 8$. The domain satisfying the inequalities of \cref{ineq} lies in between this EP2 surface and the $\gamma = 0$ plane. Ridges of this surface correspond to EP3s, plotted in \cref{thirdOrderFig}(b). As one passes through this surface along a ray originating at $\gamma = 0$, we pass from the $\mathcal{PT}-$unbroken region to the $\mathcal{PT}$-broken region. The $\mathcal{PT}$-broken region is a subset of the topologically nontrivial phase, marked by the existence of edge states.
(b) Contours of EP3s from \cref{thirdOrderFig}(a) show cusp singularities at $\Delta = 0$ and are fourth order exceptional points (EP4s).}
   \label{thirdOrderFig}
\end{figure}

\section{Generic Structure of Exceptional Surfaces}

In this section, we use a perturbative argument to demonstrate that EPs of third order generically occur at cusp singularities of curves second order EPs. 

Consider a Hamiltonian which is polynomial in $d \geq 2$ complex parameters, $H: \mathbb{C}^d \rightarrow \text{End}(\mathbb{C}^n)$ with a third-order exceptional point, $x_0 \in \mathbb{C}^d$. At this EP3, there exists at least one root, $\lambda_0$,  whose algebraic multiplicity increases by 2. Then the characteristic polynomial may be written
\begin{align}
\text{det}(\lambda - H(x)) = \sum_{i=0}^n p_i(x) (\lambda - \lambda_0)^i,
\label{puiseux}
\end{align}
where the polynomials $p_i(x)$ satisfy $p_i(x_0)=0\,\forall i<3$ and $p_3(x_0)\neq 0$. As we perturb $x_0 \rightarrow x_0+\delta x$, the first-order correction $\delta\lambda=\lambda -\lambda_0$ to the eigenvalue $\lambda_0$ is found by substituting the Taylor expansions of $p_i$ in \cref{puiseux}. To simplify future calculations, we will assume $p_0'(x_0) \neq 0$, so the eigenvalue near $\lambda_0$ is approximated by
\begin{align}
p_3(x_0)\delta\lambda^3+ p_2(x) \delta\lambda^2+ p_1(x) \delta\lambda+\left(p_0'(x_0) \cdot \delta x \right) \approx 0.
\label{cubic}
\end{align}

Consider a perturbation along a line in parameter space passing through the exceptional point. Explicitly, let this line be $\{\theta u\,|\, \theta \in \mathbb{R}\}$ for some $u \in \mathbb{C}^d$. Given $\delta x = \theta u$, a Puiseux series expansion for a subset of eigenvalues along this line is
\begin{align}
\delta\lambda(\theta) \approx \begin{cases}
\dfrac{(p_0'(x_0) \cdot u)^{1/3}}{p_3(x_0)} \,\theta^{1/3} &\text{if } p_0'(x_0) \cdot u \neq 0 \\
\dfrac{(p_1'(x_0) \cdot u)^{1/2}}{p_3(x_0)} \,\theta^{1/2} &\text{if }  p_0'(x_0) \cdot u = 0 \text{ and } p_1'(x_0) \cdot u \neq 0 \\
\dfrac{(p_2'(x_0) \cdot u)}{p_3(x_0)} \,\theta &\text{if }  p_0'(x_0) \cdot u = p_1'(x_0) \cdot u =0 \text{ and }  p_2'(x_0) \cdot u \neq 0
\end{cases}.
\end{align}
Notably, the order of the Puiseux series expansion decreases if the line spanned by $u$ is orthogonal to the normal vector of the surface $p_0 = 0$ at $x_0$.

The set of exceptional points near $x = x_0$ is approximated by the discriminant of \cref{cubic},
\begin{align}
p_1^2 p_2^2 - 4 p_0 p_2^3 - 4 p_1^3 p_3 + 18 p_0 p_1 p_2 p_3 - 27 p_0^2 p_3^2 = 0. \label{disc}
\end{align}
The point $\delta x = 0$ is readily interpreted as a \textit{singular point} of the affine algebraic variety of exceptional points approximated by \cref{disc}, since the derivatives of \cref{disc} with respect to each coordinate $\delta x_i$ all vanish. The leading term in \cref{disc} for small $\delta x_i$ is the $p_0^2 p_3^2$ term. Assuming the leading term in the expansion of $p_1^2 p_2^2 - 4 p_0 p_2^3 - 4 p_1^3 p_3 + 18 p_0 p_1 p_2 p_3$ is an odd function of $\theta$ for a perturbation along $\delta x = \theta u$, then the line determined by $p_0(x) \approx p_0'(x_0) \delta x = 0$ is a one-directional tangent to the surface of exceptional points, so we can interpret the point $x = x_0$ as a cusp singularity \cite{hilton1920plane}.

\section{Conclusion}
Non-Hermitian, tridiagonal, finite-dimensional matrices with $\mathcal{PT}$-symmetry are particularly amenable to analytical treatment. They also model a vast variety of physically realizable classical and quantum systems with balanced gain and loss. Introducing just one pair of gain-loss defect potentials breaks translational symmetry in such models and makes them non-amenable to traditional, Fourier-space band-structure methods. However, experimentally implementing $O(n)$ balanced gain-loss pairs in an $n$-site chain is exceptionally challenging, if not impossible. Therefore, we have considered models with {\it minimal non-Hermiticity} that leads to $\mathcal{PT}$-symmetry, i.e. one pair of defect potentials at mirror-symmetric sites. 

Our results include the explicit analytical expressions for various intertwinning operators, construction of equivalent Hermitian Hamiltonian in the $\mathcal{PT}$-exact phase, and analytical results for the EP contours in a uniform chain with detuned defects at the end points. We have shown that cusp points of contours of EPs correspond to EPs of one-higher order. Taken together, these results deepen our understanding of exceptional-point degeneracies in physically realizable models. 

\section*{Acknowledgment}
This work was supported, in part, by ONR Grant No. N00014-21-1-2630. Y.N.J. acknowledges the hospitality of Perimeter Institute where this work was finalized. 
\printbibliography

\end{document}